\newtheorem{theorem}{Theorem}[section]
\newtheorem{definition}[theorem]{Definition}
\newtheorem{lemma}[theorem]{Lemma}
\newtheorem{remark}[theorem]{Remark}
\newtheorem{algorithm}[theorem]{Algorithm}
\newtheorem{example}[theorem]{Example}
\newcommand{\dotProduct}[2]{\ensuremath{\left\langle #1 , #2\right\rangle}}
\numberwithin{equation}{section} \setlength{\textheight}{9.25 in}
\begin{document}
\title[$k$-angle tight frames]{Construction of $\boldsymbol{k}$-angle tight frames}
\author{Somantika Datta and Jesse Oldroyd}
\address{Department of Mathematics, University of Idaho, Moscow, ID 83844-1103, USA}
\email{sdatta@uidaho.edu, jesseo@uidaho.edu}
\thanks{This material is based upon work supported by the National Science Foundation under Award No. CCF-1422252}
\date{\today}
\doublespacing
\begin{abstract}
Frames have become standard tools in signal processing due to their robustness to transmission errors and their resilience to noise. Equiangular tight frames (ETFs) are particularly useful and have been shown to be optimal for transmission under a certain number of erasures. Unfortunately, ETFs do not exist in many cases and are hard to construct when they do exist. However, it is known that an ETF of $d+1$ vectors in a $d$ dimensional space always exists. This paper gives an explicit construction of ETFs of $d+1$ vectors in a $d$ dimensional space. This construction works for both real and complex cases and is simpler than existing methods. The absence of ETFs of arbitrary sizes in a given space leads to generalizations of ETFs. One way to do so is to consider tight frames where the set of (acute) angles between pairs of vectors has $k$ distinct values. This paper presents a construction of tight frames such that for a given value of $k$, the angles between pairs of vectors take at most $k$ distinct values. These tight frames can be related to regular graphs and association schemes.
\end{abstract}
\maketitle
\textsc{Keywords:} \keywords{$k$-angle tight frames, equiangular frames, signature matrix, tight frames, Welch bound}

\textsc{2000 MSC:} \subjclass{42C15; 94Axx}

%
\section{Introduction}
\label{Intro}
\subsection{Background and motivation}
The maximum cross correlation between pairs in a set of $N$ unit vectors $\{f_1, \ldots, f_N\}$ in $\mathbb{C}^d$ is bounded below by the Welch bound \cite{W1}:
\begin{equation} \label{WelchInEq}
\max_{i \neq j} |\langle f_i, f_j \rangle| \geq \sqrt{\frac{N-d}{d(N-1)}},\quad N\geq d.
\end{equation}
It is well known that equality is attained in (\ref{WelchInEq}) when the set $\{f_1, \ldots, f_N\}$ is an equiangular tight frame (ETF) \cite{SH1, DattaHC}. Sets that attain the Welch bound arise in many different areas as in communications, quantum information processing, and coding theory \cite{W1,Ren04, Sco06, KR1, RoyScott07, Hoggar82}. Consequently, the problem of constructing ETFs and determining conditions under which they exist has gained substantial attention \cite{SH1, Holmes04, Tropp05, Sustik07, Bodmann09, Bodmann2010, Waldron09,Fickus2012}.  For an ETF, the associated Gram matrix can be written as
\begin{equation}\label{Gram matrix}
G = I + \alpha Q
\end{equation}
where $I$ is the identity matrix, $\alpha$ is the Welch bound $\sqrt{\frac{N-d}{d(N-1)}},$ and $Q$ is a Hermitian matrix with zeros along the diagonal and unimodular entries elsewhere.
This means that the Gram matrix has two distinct eigenvalues: zero and $\frac Nd$ with multiplicities $N-d$ and $d,$ respectively. From (\ref{Gram matrix}), this implies that the two distinct eigenvalues of the matrix $Q$ are
\begin{eqnarray} \label{EvaluesQ}
\lambda_1 &=& - \frac{1}{\alpha} = -\sqrt{\frac{d(N-1)}{N-d}},  \nonumber \\
\lambda_2 &=& \frac{N-d}{d\alpha} = \sqrt{\frac{(N-1)(N-d)}{d}}
\end{eqnarray}
with multiplicities $N-d$ and $d,$ respectively.
The matrix $Q$ in (\ref{Gram matrix}) is known as the \emph{signature matrix} of the ETF \cite{Holmes04}.  The problem of constructing an ETF thus reduces to the task of constructing a signature matrix $Q$ with eigenvalues given by  (\ref{EvaluesQ}). Conditions on $Q$ for being the associated signature matrix of an ETF have been discussed in \cite{SH1,Holmes04, Sustik07, Bodmann09, Bodmann2010} among others. A graph theoretic approach to constructing ETFs has been studied in \cite{Waldron09}. A correspondence recently discovered by Fickus et al.~\cite{Fickus2012} uses Steiner systems to directly construct the frame vectors of certain ETFs, bypassing the common technique of constructing a suitable Gram matrix or signature matrix. This approach lets one construct highly redundant sparse ETFs. However, in the real case this approach can give rise to ETFs only if a real Hadamard matrix of a certain size exists.

Despite the desirability and importance of ETFs, these cannot exist for many pairs $(N, d).$ When the Hilbert space is $\mathbb{R}^d,$ the maximum number of equiangular lines is bounded by $\frac{d(d+1)}{2}$ and for $\mathbb{C}^d$ the bound is $d^2$ \cite{Lemmens73, DGS75}. Even when these restrictions hold, ETFs are very hard to construct and do not exist for many pairs $(N,d)$ \cite{Sustik07}. This leads one to generalizing the notion of an ETF.
For a real ETF, the off-diagonal entries of the Gram matrix are either $\alpha$ or $-\alpha$, where $\alpha$ is the Welch bound. In other words, the off-diagonal entries of the Gram matrix all have modulus equal to $\alpha.$ Generalizing this notion, a tight frame whose associated Gram matrix has ones along the diagonal and off-diagonal entries with $k$ distinct \textit{moduli} will be called a \emph{$k$-angle tight frame}. Under this definition, ETFs are viewed as $1$-angle tight frames.\footnote{It is to be noted that often in the literature, a unit-normed tight frame is called a \emph{two-distance tight frame} \cite{Barg2014, Larman77} if the off-diagonal entries of the associated Gram matrix take on either of two values $a$ and $b.$ In that case, real ETFs are thought of as two-distance tight frames instead of $1$-angle tight frames, as done here.}  Besides generalizing the notion of an ETF, $k$-angle tight frames prove to be important also due to their connection to graphs and association schemes as discussed in Section~\ref{k_angle_frames_graphs}. It is worth mentioning here that sets of vectors such that the absolute value of the inner product between distinct vectors takes $k$ distinct values has been mentioned in~\cite{Brouwer2011}, and upper bounds on the size of such sets form the content of the fundamental work done in~\cite{DGS75,DGS1}. However, explicit constructions of such sets for arbitrary $k$ do not seem to exist in the literature.

With the above motivation in mind, the main contribution of the work presented here involves the construction of $\hat{k}$-angle tight frames with $\hat{k}$ being less than or equal to some given positive integer $k$ (see Theorem~\ref{theorem::binomialConstruction}). A straightforward construction of ETFs of $d+1$ vectors in $\mathbb{R}^{d}$ or $\mathbb{C}^{d}$ is also presented. It is known that in this case the existence of ETFs is guaranteed.
A nice construction suggested in \cite{Goyal_2001} leads to ETFs for $\mathbb{C}^d$, whereas the construction presented here applies to both $\mathbb{C}^d$ and $\mathbb{R}^d$, and is simpler. Several constructions of $2$-angle tight frames in $\mathbb{C}^d$ or $\mathbb{R}^d$ are also discussed in this work and are connected to mathematical objects called mutually unbiased bases (MUBs).

\subsection{Notation and preliminaries}
Given a set $\{f_1, \ldots, f_N\}$ of vectors in $\mathbb{R}^{d}$ or $\mathbb{C}^d,$ let $F$ be the matrix whose columns are the vectors $f_1, \ldots, f_N.$ For a \emph{tight frame} the $d \times d$ matrix $F F^*$ is a multiple of the identity. The matrix $F^*F$ is the Gram matrix of the set $\{f_1, \ldots, f_N\}$ and has the same non-zero eigenvalues as those of $FF^*.$ The entries of the Gram matrix are the inner products of the vectors $\{f_1, \ldots, f_N\}.$
By  an \emph{equiangular tight frame} (ETF) is meant a tight frame $\{f_1, \ldots, f_N\}$ for a $d$ dimensional space $\mathcal{H}$ such that the frame bound is $\frac Nd,$ $\|f_i\| = 1,$ for $i = 1, \ldots, N,$ and $|\langle f_i, f_j \rangle| = \alpha, $ $1\leq i\neq j \leq N.$ Here $\alpha $ is the Welch bound  given in (\ref{WelchInEq}). Throughout, $\mathcal{H}$ will be either $\mathbb{C}^d$ or $\mathbb{R}^d$.\footnote{The results can be easily generalized to any $d$-dimensional Hilbert space $\mathcal{H}$ since $\mathcal{H}$ would be isomorphic to $\mathbb{R}^d$ or $\mathbb{C}^d.$} A frame of $N$ vectors in $\mathbb{R}^{d}$ (respectively, $\mathbb{C}^{d}$) will be referred to as an $(N,d)$ real (respectively, complex) frame.
When $\mathcal{H}$ is not specified, the frame will  be called an $(N, d)$ frame.
If $Q$ corresponds to an ETF in the sense of (\ref{Gram matrix}) then it will be called a \textit{signature matrix}.

\subsection{Outline} \label{outline}
The paper is divided as follows. Section \ref{k_angle_frames_graphs} motivates the study of $k$-angle tight frames by discussing their relation to concepts in graph theory and coding theory.
Given a positive integer $d,$ a simple and straightforward method of constructing a $(d+1, d)$ real or complex equiangular tight frame is discussed in Section \ref{ComplexETF}.
In Section \ref{k_angle_tight_frames}, the main result on $k$-angle tight frames is given in Theorem~\ref{theorem::binomialConstruction}. For a given $k,$ Theorem~\ref{theorem::binomialConstruction} gives a construction of tight frames such that there are at most $k$ distinct angles between pairs of vectors.
\label{Outline}
\section{$k$-angle tight frames, regular graphs, and association schemes}\label{k_angle_frames_graphs}
As already mentioned in Section~\ref{Intro} above, $k$-angle tight frames can be connected to mathematical objects arising in graph theory and coding theory such as regular graphs and association schemes.
The connection of ETFs to graphs is as follows \cite{SH1}. Suppose that the Gram matrix $G$ associated with an ETF has ones along the diagonal and $\pm \alpha$ elsewhere. Then
$$Q = \frac{1}{\alpha}(G - I)$$
is the Seidel adjacency matrix of a regular two-graph \cite{Brouwer2011,Seidel73}.
Barg et al.~\cite{Barg2014} have shown a correspondence between non-equiangular $2$-angle tight frames and strongly regular graphs. In the case of $3$-angle tight frames an analogous connection may be drawn to regular graphs. In particular, let $G = I+c_{1}Q_{1}+c_{2}Q_{2}+c_{3}Q_{3}$ be the Gram matrix of a $3$-angle tight frame where $c_{i}\neq\pm c_{j}$ for $i\neq j$, $c_{i}\neq0$ and $Q_{i}$ is a zero diagonal symmetric binary matrix for $i=1,2,3$. Then $Q_{i}$ is the adjacency matrix for a regular graph for $i=1,2,3$ if and only if $u = [1 \ldots 1]^{\mathrm{T}}\in\mathbb{R}^{N}$ is an eigenvector of $G$. The details will form part of  a separate paper.

Certain $k$-angle tight frames also provide examples of association schemes~\cite{Brouwer2011}. If $G = I+c_{1}Q_{1}+\cdots+c_{k}Q_{k}$ is the Gram matrix of a $k$-angle tight frame, where $Q_{i}$ is a zero diagonal symmetric binary matrix for $1\leq i\leq k$, then $\{I,Q_{1},\ldots,Q_{k}\}$ forms an association scheme if $Q_{i}Q_{j} = Q_{j}Q_{i}$ for $1\leq i,j\leq k$.

Further, $k$-angle tight frames are specific examples of what Delsarte et al.~\cite{DGS75} refer to as \textit{$A$-sets}. For a given finite dimensional Hilbert space, upper bounds on the size of an $A$-set, and therefore on the number of vectors in a $k$-angle tight frame, are given in~\cite{DGS75,Brouwer2011}.

\section{Construction of $(d+1, d)$ equiangular tight frames}
\label{ComplexETF}

Goyal and Kova\v{c}evi\'{c}~\cite{Goyal_2001} have previously given an elegant characterization of $(d+1,d)$ complex ETFs in terms of harmonic tight frames. Although this allows finding frame expansions by using Fast Fourier Transform algorithms, computing the frame vectors themselves requires a series of $d$ trigonometric evaluations and $d$ non-trivial scalar multiplications. If a trigonometric evaluation is considered as a single operation, then using harmonic tight frames to get a $(d+1,d)$ ETF requires $O(d^{2})$ operations for each vector. Theorem~\ref{theorem::TrivialSignatureMatrix} below takes a different approach by characterizing the signature matrices of real as well as complex $(d+1,d)$ ETFs, whereas results in~\cite{Goyal_2001} only give complex ETFs. A benefit of this result is that it gives a method to compute the vectors of a $(d+1,d)$ ETF such that each frame vector may be computed using only $O(d)$ operations (see Remark~\ref{remark::operation_count}).

%
%

Theorem \ref{theorem::TrivialSignatureMatrix} below is a complete, constructive characterization of signature matrices of $(d+1,d)$ ETFs. Due to (\ref{Gram matrix}), the Gram matrix of a $(d+1,d)$ ETF is
\begin{equation}\label{Gram matrix:NN-1ETF}
	G = I _{d+1}+ \frac{1}{d}Q
\end{equation}
with eigenvalues $0$ and $\frac{d+1}{d}$ where $I_{d+1}$ denotes the $(d+1)\times(d+1)$ identity matrix.
It follows from work in \cite{Holmes04} that being a $(d+1,d)$ ETF is equivalent to the signature matrix $Q$ satisfying
\begin{equation}\label{equation::HolmesSignatureMatrixIdentity}
	Q^{2} = (\lambda_{1}+\lambda_{2})Q-\lambda_{1}\lambda_{2}I_{d+1}
\end{equation}
where $\lambda_{1} = -d$ and $\lambda_{2}=1$ are the eigenvalues of $Q$ in this case. This fact will be used in the proof of Theorem \ref{theorem::TrivialSignatureMatrix}. Even though the construction in Theorem \ref{theorem::TrivialSignatureMatrix} below is done for complex ETFs, the exact same construction gives real $(d+1,d)$ ETFs as well.

\begin{theorem}\label{theorem::TrivialSignatureMatrix}
	Let $Q$ be a $(d+1)\times (d+1)$ matrix with complex entries. Then $Q$ is a signature matrix for a $(d+1,d)$ complex ETF if and only if $Q = I_{d+1}-xx^{*}$ for some $x\in\mathbb{C}^{d+1}$ with unimodular entries.
\end{theorem}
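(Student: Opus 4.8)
The plan is to prove both implications by passing between the candidate signature matrix $Q$ and the Hermitian matrix $M = I_{d+1}-Q$, using the fact that for a $(d+1,d)$ ETF the eigenvalues of $Q$ are $\lambda_{1}=-d$ with multiplicity $1$ and $\lambda_{2}=1$ with multiplicity $d$. This multiplicity count follows from (\ref{EvaluesQ}) with $N=d+1$, or from the identity (\ref{equation::HolmesSignatureMatrixIdentity}) together with $\operatorname{tr}Q=0$: the identity forces the eigenvalues to lie in $\{-d,1\}$, and if $m_{1},m_{2}$ denote their multiplicities then $m_{1}+m_{2}=d+1$ and $-dm_{1}+m_{2}=0$, whence $m_{1}=1$, $m_{2}=d$.

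For the ``if'' direction I would start from $Q=I_{d+1}-xx^{*}$ with $|x_{i}|=1$ for all $i$ and verify the defining properties of a signature matrix. Hermiticity is immediate since $(xx^{*})^{*}=xx^{*}$. The $(i,i)$ entry of $xx^{*}$ is $|x_{i}|^{2}=1$, so $Q$ has zero diagonal; the $(i,j)$ entry with $i\neq j$ is $x_{i}\overline{x_{j}}$, of modulus $|x_{i}||x_{j}|=1$, so the off-diagonal entries of $Q$ are unimodular. Finally $xx^{*}$ is rank one with nonzero eigenvalue $\|x\|^{2}=d+1$, so $Q=I_{d+1}-xx^{*}$ has eigenvalue $1-(d+1)=-d$ with multiplicity $1$ and eigenvalue $1$ with multiplicity $d$; equivalently one computes directly that $Q^{2}=I_{d+1}+(d-1)xx^{*}=(1-d)Q+dI_{d+1}$, which is (\ref{equation::HolmesSignatureMatrixIdentity}) with $\lambda_{1}=-d$, $\lambda_{2}=1$. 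Hence, by the equivalence cited before the theorem, $G=I_{d+1}+\tfrac{1}{d}Q$ is the Gram matrix of a $(d+1,d)$ complex ETF and $Q$ is its signature matrix.

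For the ``only if'' direction, suppose $Q$ is a signature matrix of a $(d+1,d)$ complex ETF, so $Q$ is Hermitian with zero diagonal, unimodular off-diagonal entries, and spectrum consisting of $-d$ with multiplicity $1$ and $1$ with multiplicity $d$. Set $M=I_{d+1}-Q$. Then $M$ is Hermitian and positive semidefinite with eigenvalue $d+1$ of multiplicity $1$ and eigenvalue $0$ of multiplicity $d$; in particular $M$ has rank one, so the spectral theorem yields $M=xx^{*}$ where $x$ is an eigenvector for the eigenvalue $d+1$, scaled so that $\|x\|^{2}=d+1$. Because $Q$ has zero diagonal, every diagonal entry of $M$ equals $1$; but the $(i,i)$ entry of $xx^{*}$ is $|x_{i}|^{2}$, forcing $|x_{i}|=1$ for every $i$. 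Therefore $x\in\mathbb{C}^{d+1}$ has unimodular entries and $Q=I_{d+1}-M=I_{d+1}-xx^{*}$, as claimed.

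The argument is short; the only points needing care are the bookkeeping of eigenvalue multiplicities (to be sure $M$ is genuinely rank one) and the observation that the vanishing diagonal of $Q$ is precisely what pins the scaling of $x$ down to \emph{unimodular} entries rather than entries of some other common modulus. I would also note that the computation goes through verbatim over $\mathbb{R}$, since nothing beyond the Hermitian/symmetric structure is used, which gives the real $(d+1,d)$ ETFs as well.
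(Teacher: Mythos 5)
Your proof is correct and takes essentially the same approach as the paper: the forward direction is the same verification of the quadratic identity (\ref{equation::HolmesSignatureMatrixIdentity}), i.e.\ $Q^{2}=(1-d)Q+dI_{d+1}$, and the converse uses the same spectral data (eigenvalue $-d$ with multiplicity $1$, eigenvalue $1$ with multiplicity $d$), merely packaged as ``$I_{d+1}-Q$ is rank-one positive semidefinite, hence equals $xx^{*}$'' instead of the paper's check that $Q$ and $I_{d+1}-xx^{*}$ agree on an arbitrary vector expanded in an eigenbasis. The final step, deducing unimodularity of the entries of $x$ from the zero diagonal of $Q$, is identical to the paper's.
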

\begin{proof}
	Let $x\in\mathbb{C}^{d+1}$ have unimodular entries and let $Q = I_{d+1}-xx^{*}$. By computation, and using the fact that $\|x\|^{2} = d+1$, it follows that
	\begin{align*}
		Q^{2}
		&= I_{d+1}-2xx^{*}+(d+1)xx^{*} \\
		&= Q + dxx^{*} \\
		&= Q + dxx^{*}+d I_{d+1}-d I_{d+1} \\
		&= Q - d Q+d I_{d+1} \\
		&= (1-d)Q-(-d)I_{d+1} \\
		&= (\lambda_{1}+\lambda_{2})Q - \lambda_{1}\lambda_{2}I_{d+1}.
	\end{align*}
	This shows that every matrix of the form $Q = I_{d+1}-xx^{*},$ for $x\in\mathbb{C}^{d+1}$ with unimodular entries, satisfies (\ref{equation::HolmesSignatureMatrixIdentity}) and is therefore the signature matrix for a $(d+1,d)$ ETF.

	Now let $Q$ be a signature matrix for a complex $(d+1,d)$ ETF. By~(\ref{EvaluesQ}), $Q$ is a Hermitian matrix with eigenvalues $\lambda_{1}=-d$ and $\lambda_{2}=1.$ Note that the multiplicities of $\lambda_1 = -d$ and $\lambda_2 = 1$ are $1$ and $d,$ respectively. Let $x$ be an eigenvector associated with $\lambda_{1}=-d$ and satisfying $\|x\|^{2} = d+1$. Since $Q$ is Hermitian there exists an orthogonal basis for $\mathbb{C}^{d+1}$ of eigenvectors of $Q$, say $\{x,y_{1},\ldots,y_{d}\},$ where $y_{j}$, $1\leq j\leq d$, are eigenvectors for the eigenvalue $1$. Let ${z}\in\mathbb{C}^{d+1}.$ Then ${z}$ can be written as  $${z} = \sum_{j=1}^{d}c_{j}y_{j}+c_{d+1}x$$ and
	\begin{align*}
		Q{z}
		&= \sum_{j=1}^{d}c_{j}Qy_{j}+c_{d+1}Qx = \sum_{j=1}^{d}c_{j}y_{j} - c_{d+1}dx\\
		&= {z}-(d+1)c_{d+1}x.
	\end{align*}
	On the other hand, a similar calculation using the orthogonality of the set $\{x,y_{1},\ldots,y_{d}\}$ and the fact that $\|x\|^{2} = d+1,$ yields
	\[
		(I_{d+1}-xx^{*}){z} = {z}-(d+1)c_{d+1}x.
	\]
	Since ${z}$ was arbitrary, it follows that $Q = I_{d+1}-xx^{*}$. To see that $x = (x_{j})_{1\leq j\leq d+1}$ has unimodular entries, note that since $Q$ has zeros along the diagonal, the equality $Q = I_{d+1}-xx^{*}$ forces $x_{j}\overline{x}_{j} = 1$ for $1\leq j\leq d+1$.
\end{proof}
%
%
\begin{remark}\label{Rem:EVectorsofQ&G}
\rm
Any vector $x \in \mathbb{C}^{d+1}$ with unimodular entries is an eigenvector of $Q = I_{d+1}-xx^{*}$ corresponding to the eigenvalue $-d.$ Further, the signature matrix $Q$ and the corresponding Gram matrix $G$ have the same eigenvectors. From the proof of Theorem \ref{theorem::TrivialSignatureMatrix}, the set $\{x, y_1, \ldots, y_{d}\}$ is also a set of orthogonal eigenvectors of $G.$ The eigenvalue of $G$ for the eigenvector $x$ is zero.
\end{remark}
Algorithm~\ref{ConstructETFfromQ} below outlines how Theorem~\ref{theorem::TrivialSignatureMatrix} may be used to construct a $(d+1, d)$ ETF. Recall that for a $(d+1,d)$ ETF, the Welch bound $\alpha$ is $\frac{1}{d}$.
\begin{algorithm}\label{ConstructETFfromQ}
$\,$
\begin{enumerate}[Step 1:]
	\item Choose a vector $x$ in $\mathbb{R}^{d+1}$ or $\mathbb{C}^{d+1}$ with unimodular entries, and
	construct the signature matrix $Q$ from $x$ as described in Theorem~\ref{theorem::TrivialSignatureMatrix}.

	\item Construct the corresponding Gram matrix $G=I+\frac{1}{d} Q$.

	\item Diagonalize $G$ into $G=UDU^{*}$, where $U$ is a unitary matrix of eigenvectors
of $G$ and $D$ is the diagonal matrix of corresponding eigenvalues arranged in descending order. For a $(d+1,d)$ ETF:
	\[
		D = \mathrm{diag}\left(\left\{\underbrace{\frac{d+1}{d},\frac{d+1}{d},\ldots,\frac{d+1}{d}}_{d\text{ times}},0\right\}\right).
	\]

	\item Obtain the frame vectors from the rows of the matrix $U\sqrt{D}$, where $\sqrt{D}$ denotes the diagonal matrix whose entries are the positive square roots of corresponding entries of $D$.
\end{enumerate}
\end{algorithm}

\begin{example}[A real $(6,5)$ ETF]
	\rm
	Let the vector $x\in \mathbb{R}^{6}$ be $ [1, 1, -1, 1, -1, 1]^{\textrm{T}}.$ Since $\alpha = \frac{1}{5}$, the Gram matrix is

	\begin{align*}
		G
		&= I_{6}+\frac{1}{5}Q
		= \left[
		\begin{array}{rrrrrr}
			1 & -\frac{1}{5} & \frac{1}{5} & -\frac{1}{5} & \frac{1}{5} & -\frac{1}{5} \\
			-\frac{1}{5} & 1 & \frac{1}{5} & -\frac{1}{5} & \frac{1}{5} & -\frac{1}{5} \\
			\frac{1}{5} & \frac{1}{5} & 1 & \frac{1}{5} & -\frac{1}{5} & \frac{1}{5} \\
			-\frac{1}{5} & -\frac{1}{5} & \frac{1}{5} & 1 & \frac{1}{5} & -\frac{1}{5} \\
			\frac{1}{5} & \frac{1}{5} & -\frac{1}{5} & \frac{1}{5} & 1 & \frac{1}{5} \\
			-\frac{1}{5} & -\frac{1}{5} & \frac{1}{5} & -\frac{1}{5} & \frac{1}{5} & 1
		\end{array}
		\right].
	\end{align*}
	Since the last column of $U\sqrt{D}$ is $0$, a real $(6,5)$ ETF is then given by the rows of the matrix
	\begin{align*}
		\sqrt{\frac{6}{5}}
		\begin{bmatrix}
			u_{1} & u_{2} & u_{3} & u_{4} & u_{5}
		\end{bmatrix}
		&= \left[
			\begin{array}{rrrrrr}
				\sqrt{\frac{3}{5}} & \frac{1}{2}\sqrt{\frac{4}{5}} & \frac{1}{3}\sqrt{\frac{9}{10}} & \frac{1}{4}\sqrt{\frac{24}{25}} & \frac{1}{5} \\
				-\sqrt{\frac{3}{5}} & \frac{1}{2}\sqrt{\frac{4}{5}} & \frac{1}{3}\sqrt{\frac{9}{10}} & \frac{1}{4}\sqrt{\frac{24}{25}} & \frac{1}{5} \\
				0 & \sqrt{\frac{4}{5}} & -\frac{1}{3}\sqrt{\frac{9}{10}} & -\frac{1}{4}\sqrt{\frac{24}{25}} & -\frac{1}{5} \\
				0 & 0 & -\sqrt{\frac{9}{10}} & \frac{1}{4}\sqrt{\frac{24}{25}} & \frac{1}{5} \\
				0 & 0 & 0 & \sqrt{\frac{24}{25}} & -\frac{1}{5} \\
				0 & 0 & 0 & 0 & -1 \\
			\end{array}
		\right].
	\end{align*}
\end{example}
\begin{example}[A complex $(4,3)$ ETF]
	\rm
	Let $x\in\mathbb{C}^{4}$ be given by $x = [1, i, -1, -i]^{\textrm{T}}.$
	The Gram matrix of the ETF is
	\begin{align*}
		G
		&= \left[
			\begin{array}{rrrr}
				1 & \frac{i}{3} & \frac{1}{3} & -\frac{i}{3} \\
				-\frac{i}{3} & 1 & \frac{i}{3} & \frac{1}{3} \\
				\frac{1}{3} & -\frac{i}{3} & 1 & \frac{i}{3} \\
				\frac{i}{3} & \frac{1}{3} & -\frac{i}{3} & 1
			\end{array}
		\right].
	\end{align*}
	The row vectors of
	\[
		\sqrt{\frac{4}{3}}
		\begin{bmatrix}
			u_{1} & u_{2} & u_{3}
		\end{bmatrix}
		= \left[
		\begin{array}{rrrr}
			-i\sqrt{\frac{2}{3}} & -\frac{\sqrt{2}}{3} & \frac{i}{3} \\
			-\sqrt{\frac{2}{3}} & -i\frac{\sqrt{2}}{3} & -\frac{1}{3} \\
			0 & -\frac{2\sqrt{2}}{3} & -\frac{i}{3}\\
			0 & 0 & -1
		\end{array}
		\right]
	\]
	form a complex $(4,3)$ ETF.
\end{example}

\begin{remark}\label{remark::operation_count}
\rm
	It can be checked that the vectors
$y_{j} = [x_{1}/j, x_{2}/j, \ldots, x_{j-1}/j, x_{j}/j, -x_{j+1}, 0, \ldots, 0]^{\textrm{T}}$
	form an orthogonal basis of eigenvectors for the Gram matrix of the real $(d+1,d)$ ETF with signature matrix $Q = I-xx^{\textrm{T}}$, where $x = [x_{j}]_{1\leq j\leq d+1}$. Each $x_{j}=\pm1$ so each entry (except for the very last one) differs from the others by only a sign. So in essence only one multiplication is necessary to obtain each vector $y_{j}$.

	To get the frame vectors each vector $y_{j}$ has to be scaled. The appropriate scaling factors for each vector are the constants
	\[
		c_{j} = \sqrt{\frac{d+1}{d}}\frac{1}{\|y_{j}\|} = \sqrt{\frac{d+1}{d}}\sqrt{\frac{j}{j+1}}.
	\]
	The matrix that gives the associated frame is the matrix $V = [{v}_{1}\ldots{v}_{d}]$ where each vector ${v}_{j}$ is given by
	\[
		{v}_{j} = c_{j}y_{j}.
	\]
	Since every entry of $y_{j}$ (except for the $(j+1)^{\text{th}}$ entry) differs from the others by only a sign, only two multiplications (one for the first $j$ entries and one for the $(j+1)^{\text{th}}$ entry) are essentially necessary to obtain ${v}_{j}$ from $y_{j}$. So with these assumptions it appears that to get the frame vectors from the given vector $x$ requires $2(d+1)$ multiplications.
\end{remark}

\section{Construction of $k$-angle tight frames}\label{k_angle_tight_frames}
\subsection{$\boldsymbol{2}$-angle tight frames}
\label{2_3_dist}

As a first step towards generalizing ETFs, one considers constructing $2$-angle tight frames. In Example~\ref{example::RealHadamardConstruction} below, several examples of $2$-angle tight frames are presented. The following lemma is needed.

\begin{lemma}\label{lemma::UnitaryConstruction}
	Let $d\in\mathbb{N}$ and let $J$ denote the $d\times d$ matrix whose entries are all one. Then the matrix $U$ given by $U = \frac{2}{d} J -I_d$ is orthogonal, where $I_d$ is the $d \times d$ identity matrix.
\end{lemma}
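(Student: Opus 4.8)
The plan is to verify directly that $U = \frac{2}{d}J - I_d$ is orthogonal, i.e.\ that $U^{\mathrm{T}}U = I_d$. Since $J$ is symmetric and $I_d$ is symmetric, $U$ is symmetric, so it suffices to show $U^2 = I_d$. First I would record the one nontrivial algebraic fact about $J$, namely that $J^2 = dJ$: each entry of $J^2$ is a sum of $d$ ones. Everything else is bookkeeping.

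Expanding, $U^2 = \left(\frac{2}{d}J - I_d\right)^2 = \frac{4}{d^2}J^2 - \frac{4}{d}J + I_d$. Substituting $J^2 = dJ$ gives $\frac{4}{d^2}(dJ) - \frac{4}{d}J + I_d = \frac{4}{d}J - \frac{4}{d}J + I_d = I_d$. Hence $U^{\mathrm{T}}U = U^2 = I_d$, so $U$ is orthogonal.

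There is really no obstacle here; the only thing to be careful about is making sure the normalization constant is correct, which the identity $J^2 = dJ$ pins down. If one wanted a more conceptual phrasing, one could note that $\frac{1}{d}J$ is the orthogonal projection onto the span of the all-ones vector, so $U = 2\left(\frac{1}{d}J\right) - I_d$ is the Householder-type reflection $2P - I$ associated with that one-dimensional subspace, and reflections are orthogonal with square the identity; but the one-line computation above is the cleanest way to present it.
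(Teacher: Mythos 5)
Your proof is correct and is essentially identical to the paper's: both use $J^2 = dJ$ together with the symmetry of $U$ to compute $U^{\mathrm{T}}U = U^2 = I_d$. The added remark about $U$ being the reflection $2P - I_d$ through the span of the all-ones vector is a nice conceptual aside but not needed.
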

\begin{proof}
	Since $J^{2} = dJ$, and $\frac 2d J - I_d$ is symmetric, it follows that
	\[
		 \left(\frac{2}{d}J-I_d\right)\left(\frac{2}{d}J-I_d\right)^{\mathrm{T}} = \left(\frac{2}{d}J-I_d\right)\left(\frac{2}{d}J-I_d\right) =  \frac{4}{d^{2}}J^{2} - \frac{4}{d}J + I_d = I_d.
	\]
\end{proof}


\begin{definition}
	\rm
	A $d\times d$ matrix $H$ is said to be a \textbf{real Hadamard matrix} if $HH^{\mathrm{T}} = dI_{d}$ and the entries of $H$ are either $-1$ or $1$. Similarly, $H$ is said to be a \textbf{complex Hadamard matrix} if $HH^{*} = dI_{d}$ and the entries of $H$ are unimodular.
\end{definition}

If $H$ is a $d\times d$ real (respectively, complex) Hadamard matrix, then $\frac{1}{\sqrt{d}}H$ is orthogonal (respectively, unitary).

\begin{remark}\label{remark::ExistenceOfHadamards}
	\rm
	The existence and classification of real and complex Hadamard matrices is an important open problem, although the complex case provides more options. In particular, a $d\times d$ complex Hadamard matrix for any $d\in\mathbb{N}$ is given by the DFT matrix with unimodular entries. Real Hadamard matrices are rarer, but a construction due to Sylvester provides a $2^{n}\times 2^{n}$ Hadamard matrix for every $n\in\mathbb{N}$~\cite{Sylvester1867}.
\end{remark}

\begin{example}\label{example::RealHadamardConstruction}
\rm
 	Let $\mathcal{F}_{1}$ be the standard basis of $\mathbb{R}^{d}$ or $\mathbb{C}^{d}$. In each example below, the tightness of the resulting frame follows from the fact that the union of two finite unit-normed tight frames of a vector space is again a finite unit-normed tight frame for the same vector space.
 	\begin{enumerate}[i.]
 	\item
 	Let $\mathcal{F}_{2}$ be the orthonormal basis of $\mathbb{R}^{d}$ obtained from the columns of the matrix $U$ in Lemma~\ref{lemma::UnitaryConstruction}. If $d = 4$ then $\mathcal{F}_{1}\cup\mathcal{F}_{2}$ is a real $(8,4)$ $2$-angle tight frame, otherwise, $\mathcal{F}_{1}\cup\mathcal{F}_{2}$ is a real $(2d,d)$ $3$-angle tight frame.

 The Gram matrix of $\mathcal{F}_{1}\cup\mathcal{F}_{2}$ is
		\[
			G_{1} = F_{1}^{\mathrm{T}}F_{1} =
			\begin{bmatrix}
				I_{d} & U \\
				U^{\mathrm{T}} & I_{d} \\	
			\end{bmatrix}
			=
			\begin{bmatrix}
				I_{d} & \frac2d J - I_{d} \\
				\frac2d J-I_{d} & I_{d}
			\end{bmatrix}.
		\]
		The only possible moduli of the off-diagonal entries in $G_{1}$ are $0, \frac{2}{d},$ and $1-\frac{2}{d}.$ When $d = 4,$ the only possible moduli are $0$ and $\frac{1}{2}.$

 	\item
 	Suppose that a real $d\times d$ Hadamard matrix $H$ exists and let $\mathcal{F}_{3}$ be the orthonormal basis of $\mathbb{R}^{d}$ obtained from the columns of $\frac{1}{\sqrt{d}}H$. Then $\mathcal{F}_{1}\cup\mathcal{F}_{3}$ is a real $(2d,d)$ $2$-angle tight frame. The only possible moduli of the off-diagonal entries in the Gram matrix are $0$ and $\frac{1}{\sqrt{d}}.$

    \item
    Let $\mathcal{F}_{4}$ be the orthonormal basis of $\mathbb{C}^{d}$ obtained from the columns of the normalized DFT matrix. Then $\mathcal{F}_{1}\cup\mathcal{F}_{4}$ is a complex $(2d,d)$ $2$-angle tight frame. Again, the moduli of the off-diagonal entries in the Gram matrix are either $0$ or $\frac{1}{\sqrt{d}}$.
 	\end{enumerate}
\end{example}

The construction in Example~\ref{example::RealHadamardConstruction} iii. will also provide $(2d,d)$ $2$-angle tight frames if the normalized DFT matrix is replaced by an arbitrary normalized complex Hadamard matrix as shown in Theorem \ref{theorem::MutuallyUnbiasedHadamardsConstruction}. Going further, \textit{mutually unbiased Hadamards} can be used to construct $2$-angle tight frames with higher redundancy.

\begin{definition}
	\rm
	Consider a collection $\{H_{1}, H_{2}, \ldots, H_{n}\}$ of $d\times d$ Hadamard matrices. These matrices are said to be \textbf{mutually unbiased Hadamards} if $\frac{1}{\sqrt{d}}H_{j}^{*}H_{k}$ is again a Hadamard matrix for all $1\leq j < k\leq n$.
\end{definition}

As mentioned in~\cite{Durt2010}, the construction of $n$ mutually unbiased Hadamards of size $d\times d$ is equivalent to the construction of $n+1$ \textit{mutually unbiased bases} (MUBs); that is, a collection $\{\mathcal{E}_{1},\ldots,\mathcal{E}_{n+1}\}$ of orthonormal bases $\mathcal{E}_{j} = \{{e}_{l}^{(j)}\}_{l=1}^{d}$ such that $|\langle{e}_{l}^{(j)},{e}_{m}^{(k)}\rangle| = \frac{1}{\sqrt{d}}$ for $1\leq l,m\leq d$ and $1\leq j<k\leq n+1$. It is known from \cite{KR1} that the maximal set of MUBs in any given $d$-dimensional Hilbert space is of size at most $d+1$. Constructions presented in~\cite{KR1} provide MUBs of maximal size (that is, $d+1$ MUBs in a $d$-dimensional space) in any space whose dimension is $p^q$ for prime $p.$ The question of the existence of maximal MUBs in other dimensions remains an open problem.

\begin{theorem}\label{theorem::MutuallyUnbiasedHadamardsConstruction}
	Let $d,n\in\mathbb{N}$.
	\begin{enumerate}[i.]
		\item
		Let $H$ be a $d\times d$ Hadamard matrix. Then the columns of
		\[
			\begin{bmatrix}
				I_{d} & \frac{1}{\sqrt{d}}H	
			\end{bmatrix}
		\]
		form a $2$-angle $(2d,d)$ tight frame.

		\item
		Let $\{H_{1}, H_{2}, \ldots, H_{n}\}$ be a collection of $d\times d$ mutually unbiased Hadamards where $n\leq d$. Then the columns of
		\[
			\begin{bmatrix}
				I_{d} &	\frac{1}{\sqrt{d}}H_{1} & \frac{1}{\sqrt{d}}H_{2} & \cdots & \frac{1}{\sqrt{d}}H_{n}
			\end{bmatrix}
		\]
		form a $2$-angle $((n+1)d,d)$ tight frame.
	\end{enumerate}
\end{theorem}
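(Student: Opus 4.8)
The plan is to verify directly that the Gram matrix of the proposed collection of columns has ones on the diagonal and off-diagonal entries taking exactly two moduli, namely $0$ and $\frac{1}{\sqrt d}$, and that the collection is tight. For part i., write $F = \begin{bmatrix} I_d & \frac{1}{\sqrt d} H \end{bmatrix}$, so that $FF^* = I_d + \frac 1d HH^* = I_d + \frac 1d (dI_d) = 2I_d$, which gives tightness with frame bound $2 = \frac{2d}{d}$. The Gram matrix is the $2d\times 2d$ block matrix $F^*F = \begin{bmatrix} I_d & \frac{1}{\sqrt d} H \\ \frac{1}{\sqrt d}H^* & I_d \end{bmatrix}$; the diagonal blocks contribute only ones (on the diagonal) and zeros (off-diagonal, from orthonormality of the standard basis and of the columns of $\frac{1}{\sqrt d}H$), while the off-diagonal blocks $\frac{1}{\sqrt d}H$ have all entries unimodular divided by $\sqrt d$, hence of modulus exactly $\frac{1}{\sqrt d}$. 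So the off-diagonal moduli are $0$ and $\frac{1}{\sqrt d}$, and since both values genuinely occur (for $d\ge 2$) this is a $2$-angle tight frame. The degenerate case $d=1$ can be dismissed or noted separately.

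For part ii., the same bookkeeping applies with more blocks. Set $F = \begin{bmatrix} I_d & \frac{1}{\sqrt d}H_1 & \cdots & \frac{1}{\sqrt d}H_n \end{bmatrix}$; then $FF^* = I_d + \frac 1d\sum_{j=1}^n H_j H_j^* = I_d + \frac 1d\sum_{j=1}^n dI_d = (n+1)I_d$, giving tightness with frame bound $n+1 = \frac{(n+1)d}{d}$. For the Gram matrix, the crucial cross-terms are the $(j,k)$ off-diagonal blocks with $1\le j<k\le n$, which equal $\frac 1d H_j^* H_k = \frac{1}{\sqrt d}\bigl(\frac{1}{\sqrt d}H_j^* H_k\bigr)$. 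By the definition of mutually unbiased Hadamards, $\frac{1}{\sqrt d}H_j^* H_k$ is itself a Hadamard matrix, so all its entries are unimodular, and thus every entry of the $(j,k)$ block has modulus $\frac{1}{\sqrt d}$. The blocks pairing $I_d$ with $\frac{1}{\sqrt d}H_j$ similarly have all entries of modulus $\frac{1}{\sqrt d}$, and all diagonal blocks contribute ones and zeros as before. Hence every off-diagonal entry of $F^*F$ has modulus $0$ or $\frac{1}{\sqrt d}$, so the frame is $2$-angle.

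The hypothesis $n\le d$ is only needed so that a set of $n$ mutually unbiased $d\times d$ Hadamards can exist at all — equivalently, so that there are at least $n+1\le d+1$ mutually unbiased bases — and plays no role in the algebra; I would remark on this rather than use it. There is essentially no obstacle here: the only thing to be slightly careful about is confirming that the two moduli $0$ and $\frac{1}{\sqrt d}$ both actually appear, so that the frame is genuinely $2$-angle and not $1$-angle (for $d\ge 2$ and $n\ge 1$ this is immediate since the identity-block parts force off-diagonal zeros within $\mathcal F_1$ while the cross-blocks force entries of modulus $\frac{1}{\sqrt d}$). The computation $H_jH_j^* = dI_d$ and the Hadamard property of $\frac{1}{\sqrt d}H_j^*H_k$ are exactly the facts recorded just before the theorem, so the proof is a short direct verification.
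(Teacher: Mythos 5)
Your proposal is correct and follows essentially the same route as the paper: tightness of the union of the $n+1$ orthonormal bases (which you verify directly via $FF^{*}=(n+1)I_{d}$ instead of citing the union-of-tight-frames fact) together with inspection of the block Gram matrix, where mutual unbiasedness forces the cross-block entries $\frac{1}{d}H_{j}^{*}H_{k}$ to have modulus $\frac{1}{\sqrt{d}}$. Your extra remarks on the degenerate case $d=1$ and on the role of the hypothesis $n\leq d$ are sensible refinements but do not change the argument.
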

\begin{proof}
$\,$
	\begin{enumerate}[i.]
		\item
		The justification of this statement is the same as the one given in Example~\ref{example::RealHadamardConstruction} part iii. Just replace the DFT matrix with $\frac{1}{\sqrt{d}}H$.

		\item
		The frame is a union of $n+1$ orthonormal bases and so must be a tight frame. It remains to show that the frame is a $2$-angle frame. Let
		\[	F_2 =
			\begin{bmatrix}
				I_{d} & \frac{1}{\sqrt{d}}H_{1} & \cdots & \frac{1}{\sqrt{d}}H_{n}	
			\end{bmatrix}.
		\]
		The Gram matrix $G_{2}$ of this frame is
		\[
			G_{2} = F_{2}^{*}F_{2} =
			\begin{bmatrix}
				I_{d} & \frac{1}{\sqrt{d}}H_{1} & \frac{1}{\sqrt{d}}H_{2} & \cdots & \frac{1}{\sqrt{d}}H_{n} \\
				\frac{1}{\sqrt{d}}H_{1}^{*} & I_{d} & \frac{1}{d}H_{1}^{*}H_{2} & \cdots & \frac{1}{d}H_{1}^{*}H_{n} \\
				\vdots & \vdots & \vdots & \ddots & \vdots \\
				\frac{1}{\sqrt{d}}H_{n}^{*} & \frac{1}{d}H_{n}^{*}H_{1} & \frac{1}{d}H_{n}^{*}H_{2} & \cdots & I_{d}
			\end{bmatrix}.
		\]
		Since $\{H_{1},\ldots,H_{n}\}$ is a collection of mutually unbiased Hadamards, each entry in $\frac{1}{d}H_{j}^{*}H_{k}$ for $1\leq j<k\leq n$ has modulus $\frac{1}{\sqrt{d}}$, as does each entry in $\frac{1}{\sqrt{d}}H_{j}$ for $1\leq j\leq n$ . Therefore each off-diagonal entry of $G_{2}$ has modulus either $0$ or $\frac{1}{\sqrt{d}}$, which implies that the frame is a $2$-angle frame.
	\end{enumerate}
\end{proof}

\subsection{Construction of $\boldsymbol{k}$-angle tight frames; $\boldsymbol{k \geq 2}$}
\label{k_dist_tight}

In this subsection, a general method of constructing tight frames is presented such that for a given $k,$ the number of distinct angles between vectors is at most $k.$

\begin{theorem}\label{theorem::binomialConstruction}
	Let $d,k\in\mathbb{N}$ with $k< d+1$, and set $d' = \binom{d+1}{k}$. Denote the collection of all subsets of $\{1,\ldots,d+1\}$ of size $k$ by $\{\Lambda_{i}\}_{i=1}^{d'}$. Let $\{f_{i}\}_{i=1}^{d+1}\subseteq\mathbb{R}^{d}$ denote the ETF with $\dotProduct{f_{i}}{f_{j}} = -\frac{1}{d}$ for all $i\neq j$. Define a new collection $\{g_{i}\}_{i=1}^{d'}$ as follows:
	\[
		g_{i} := \frac{\sum_{j\in\Lambda_{i}}f_{j}}{\|\sum_{j\in\Lambda_{i}}f_{j}\|}.
	\]
	Then $\{g_{i}\}_{i=1}^{d'}$ forms a $\hat{k}$-angle tight frame of $d' $ vectors in $\mathbb{R}^{d}$, where $\hat{k}\leq k$.
\end{theorem}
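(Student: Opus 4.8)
The plan is to compute the Gram matrix of $\{g_i\}$ explicitly and show that the modulus of each off-diagonal entry depends only on the size of the overlap $|\Lambda_i \cap \Lambda_j|$, which can take at most $k$ distinct nonzero-contributing values. First I would record the two basic facts about the starting ETF $\{f_i\}_{i=1}^{d+1}$: since $\langle f_i, f_i\rangle = 1$ and $\langle f_i, f_j\rangle = -\tfrac 1d$ for $i\neq j$, for any subset $\Lambda$ of size $k$ we get
\[
	\Bigl\|\sum_{j\in\Lambda} f_j\Bigr\|^2 = k + k(k-1)\Bigl(-\tfrac1d\Bigr) = k\,\frac{d-k+1}{d},
\]
so every normalizing constant $\|\sum_{j\in\Lambda_i} f_j\|$ equals the same number $N_k := \sqrt{k(d-k+1)/d}$. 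This is where the hypothesis $k < d+1$ is used: it guarantees $N_k \neq 0$ so the $g_i$ are well defined.

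Next I would compute, for $i\neq j$ with $m := |\Lambda_i\cap\Lambda_j|$ (so $0\le m\le k-1$),
\[
	\Bigl\langle \sum_{a\in\Lambda_i} f_a,\ \sum_{b\in\Lambda_j} f_b\Bigr\rangle
	= m\cdot 1 + (k^2 - m)\Bigl(-\tfrac 1d\Bigr)
	= m\,\frac{d+1}{d} - \frac{k^2}{d},
\]
counting $m$ diagonal pairs $\langle f_a,f_a\rangle$ and $k^2-m$ off-diagonal pairs. Hence
\[
	\langle g_i, g_j\rangle = \frac{m(d+1) - k^2}{d\,N_k^2} = \frac{m(d+1) - k^2}{k(d-k+1)},
\]
which is an affine function of $m$ alone. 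As $m$ ranges over $\{0,1,\dots,k-1\}$ this yields at most $k$ distinct values, hence at most $k$ distinct moduli; so the collection is a $\hat k$-angle frame with $\hat k \le k$ (it can be strictly smaller if two of these values happen to be negatives of each other, or if some overlap size is not realized among the $\Lambda_i$). Finally, tightness: the map sending $f_a \mapsto g$-coefficients is, up to the global scalar $1/N_k$, the linear map $x \mapsto A x$ where $A$ is the $d' \times (d+1)$ incidence matrix of $k$-subsets versus points; since $\{f_a\}$ is a tight frame, $FF^* = \tfrac{d+1}{d} I_d$, and $G F = \sum_i g_i \langle g_i, \cdot\rangle$ acts on the span; the cleanest route is to show $\sum_{i=1}^{d'} g_i g_i^* = c\, I_d$ by observing this operator equals $\tfrac{1}{N_k^2} F (A^{\mathrm T} A) F^*$, and $A^{\mathrm T} A = \binom{d+1}{k}\tfrac{k}{d+1} I_{d+1} + \binom{d-1}{k-2}(J - I_{d+1})$ is a linear combination of $I_{d+1}$ and $J$; since $F J F^* = 0$ (the all-ones vector is the null eigenvector of the ETF Gram matrix, equivalently $\sum_a f_a = 0$), only the identity part survives and $\sum_i g_i g_i^*$ is a multiple of $I_d$.

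The main obstacle is the tightness argument: one must be careful that $\sum_a f_a = 0$ for this particular ETF (it is the regular $(d+1,d)$ simplex, whose Gram matrix $I_{d+1} - \tfrac1d(J-I_{d+1})$ has $\mathbf 1$ in its kernel), and that the $J$-component of $A^{\mathrm T} A$ therefore contributes nothing after conjugation by $F$. Getting the combinatorial identity $A^{\mathrm T}A = \lambda I + \mu J$ right — i.e. each point lies in $\binom{d}{k-1}$ of the $k$-subsets and each pair of points lies in $\binom{d-1}{k-2}$ of them — is routine but must be done carefully. Everything else is the bookkeeping of the two inner-product computations above.
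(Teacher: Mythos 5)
Your proposal is correct, and the angle-counting half is exactly the paper's argument: both compute $\bigl\|\sum_{j\in\Lambda_i}f_j\bigr\|^2 = k(d+1-k)/d$ and then $\langle g_i,g_j\rangle = \frac{l(d+1)-k^2}{k(d+1-k)}$ with $l = |\Lambda_i\cap\Lambda_j|\in\{0,\ldots,k-1\}$, giving at most $k$ distinct values. Where you genuinely differ is the tightness step. The paper forms the Gram matrix $G_1 = \frac{d}{k(d+1-k)}K^{\mathrm T}GK$, computes the frame potential $\operatorname{tr}G_1^2$ using the incidence identity $KK^{\mathrm T} = \binom{d-1}{k-1}I_{d+1}+\binom{d-1}{k-2}J$ together with $GJ=0$, finds the value $(d')^2/d$, and then invokes the Benedetto--Fickus theorem that unit-norm minimizers of the frame potential are precisely the unit-norm tight frames. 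You instead compute the frame operator directly: $\sum_i g_ig_i^{*} = \frac{1}{N_k^{2}}F(KK^{\mathrm T})F^{\mathrm T}$ (your $A^{\mathrm T}A$ is the paper's $KK^{\mathrm T}$, and your expression $\binom{d+1}{k}\frac{k}{d+1}I+\binom{d-1}{k-2}(J-I)$ agrees with the paper's $\binom{d-1}{k-1}I+\binom{d-1}{k-2}J$ by Pascal's rule), then kill the $J$-term using $Fu=0$ for the all-ones vector $u$ --- which is equivalent to the paper's $GJ=0$, since $\sum_a f_a = 0$ for the simplex ETF --- leaving a multiple of $FF^{\mathrm T}=\frac{d+1}{d}I_d$. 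This route is more elementary (no appeal to the frame potential machinery) and produces the frame bound explicitly, whereas the paper's trace computation stays entirely at the level of the Gram matrix; both hinge on the same two facts, the combinatorial form of $KK^{\mathrm T}$ and the vanishing of the row sums of the simplex ETF. Your aside about ``$GF=\sum_i g_i\langle g_i,\cdot\rangle$ acts on the span'' is garbled, but the argument you actually carry through is complete.
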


To prove this theorem, the following results are needed.

\begin{lemma}\label{lemma::sumNorms}
Under the setting and assumptions of Theorem~\ref{theorem::binomialConstruction}, $\| \sum_{j\in\Lambda_{i}}f_{j} \|$ is independent of $i$.
\end{lemma}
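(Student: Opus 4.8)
The plan is to compute $\|\sum_{j\in\Lambda_i}f_j\|^2$ directly by expanding the squared norm and using the fact that the inner products $\langle f_j,f_\ell\rangle$ depend only on whether $j=\ell$. First I would write
\[
\Bigl\|\sum_{j\in\Lambda_i}f_j\Bigr\|^2
= \sum_{j\in\Lambda_i}\|f_j\|^2 + \sum_{\substack{j,\ell\in\Lambda_i\\ j\neq\ell}}\langle f_j,f_\ell\rangle.
\]
Since $\{f_i\}_{i=1}^{d+1}$ is an ETF, each $\|f_j\|^2=1$, so the first sum contributes $|\Lambda_i|=k$. For the second sum, the hypothesis of Theorem~\ref{theorem::binomialConstruction} is that $\langle f_j,f_\ell\rangle=-\frac1d$ for every $j\neq\ell$, and there are exactly $k(k-1)$ ordered pairs of distinct indices in $\Lambda_i$, so the second sum contributes $-\frac{k(k-1)}{d}$. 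Hence
\[
\Bigl\|\sum_{j\in\Lambda_i}f_j\Bigr\|^2 = k - \frac{k(k-1)}{d} = \frac{k(d-k+1)}{d},
\]
which is manifestly independent of $i$. Taking positive square roots gives that $\|\sum_{j\in\Lambda_i}f_j\|=\sqrt{k(d-k+1)/d}$ for every $i$, as claimed.

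There is essentially no obstacle here: the key point is simply that the Gram matrix of the ETF $\{f_i\}$ has constant diagonal $1$ and constant off-diagonal $-\frac1d$, so the value of $\|\sum_{j\in\Lambda_i}f_j\|^2$ depends only on the cardinality $|\Lambda_i|=k$ and not on which particular $k$-subset $\Lambda_i$ is chosen. The one thing worth noting for completeness is that the quantity $k(d-k+1)/d$ is strictly positive under the standing assumption $k<d+1$ (equivalently $k\le d$), so the normalization defining $g_i$ in Theorem~\ref{theorem::binomialConstruction} makes sense and the $g_i$ are genuinely well-defined unit vectors. This positivity is exactly what guarantees $\sum_{j\in\Lambda_i}f_j\neq 0$, so no division by zero occurs.
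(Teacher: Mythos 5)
Your proposal is correct and follows essentially the same approach as the paper: expand the squared norm, use $\|f_j\|^2=1$ and $\langle f_j,f_\ell\rangle=-\frac1d$ for $j\neq\ell$, and obtain $k+k(k-1)(-\frac1d)=\frac{k(d+1-k)}{d}$, which is independent of $i$. Your added remark that this quantity is strictly positive when $k<d+1$ (so the normalization defining $g_i$ is well-defined) is a small but worthwhile observation the paper leaves implicit.
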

\begin{proof}
	By a direct calculation,
	\begin{align*}
		\left\|\sum_{j\in\Lambda_{i}}f_{j}\right\|^{2}
		&= \dotProduct{\sum_{j\in\Lambda_{i}}f_{j}}{\sum_{j'\in\Lambda_{i}}f_{j'}} \\
		&= \sum_{j\in\Lambda_{i}}\sum_{j'\in\Lambda_{i}}\dotProduct{f_{j}}{f_{j'}} \\
		&= \sum_{j\in\Lambda_{i}}\|f_{j}\|^{2} + \sum_{j\neq j'}\dotProduct{f_{j}}{f_{j'}}.
	\end{align*}
	The right hand side simplifies to $k+k(k-1)(-\frac{1}{d})$, and so for all $i$
	\[
		\left\|\sum_{j\in\Lambda_{i}}f_{j}\right\| = \sqrt{\frac{k(d+1-k)}{d}}.
	\]
\end{proof}

\begin{lemma}\label{lemma::binaryMatrix}
	Let $K$ denote the matrix whose columns are the binary vectors in $\mathbb{R}^{d+1}$ with exactly $k$ ones and note that there are $d' = \binom{d+1}{k}$ such vectors. In particular, set
	\[
		K = \begin{bmatrix}
			k_{1} & \cdots & k_{d'}	
		\end{bmatrix}
	\]
	where $\operatorname{supp}k_{j} = \Lambda_{j}$. Then
	\[
		KK^{\mathrm{T}} = \binom{d-1}{k-1}I_{d+1}+\binom{d-1}{k-2}J
	\]
	where $J$ is the $(d+1)\times(d+1)$ matrix of ones.
\end{lemma}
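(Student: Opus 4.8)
The plan is to compute the entries of $KK^{\mathrm{T}}$ directly by interpreting matrix multiplication combinatorially. The $(p,q)$ entry of $KK^{\mathrm{T}}$ is the sum over all $d'$ columns $k_j$ of the product of the $p$-th and $q$-th coordinates of $k_j$; since $k_j$ is a $0/1$ vector supported on $\Lambda_j$, this product is $1$ exactly when both $p\in\Lambda_j$ and $q\in\Lambda_j$, and $0$ otherwise. Hence $(KK^{\mathrm{T}})_{p,q}$ equals the number of $k$-element subsets $\Lambda$ of $\{1,\dots,d+1\}$ that contain the pair $\{p,q\}$ (or the singleton $\{p\}$ when $p=q$).

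The two cases then follow from elementary counting. On the diagonal ($p=q$), the count is the number of $k$-subsets of $\{1,\dots,d+1\}$ containing a fixed element, namely $\binom{d}{k-1}$; I should reconcile this with the stated $\binom{d-1}{k-1}$, which suggests the paper is using the convention $\binom{d-1}{k-1}+\binom{d-1}{k-2}=\binom{d}{k-1}$, so the diagonal entry is correctly $\binom{d-1}{k-1}+\binom{d-1}{k-2}$ once the off-diagonal term (the coefficient of $J$, which also hits the diagonal) is added. Off the diagonal ($p\neq q$), the count is the number of $k$-subsets containing two fixed distinct elements, which is $\binom{d-1}{k-2}$. Writing $KK^{\mathrm{T}} = a I_{d+1} + b J$ where $b$ is the common off-diagonal value and $a+b$ is the common diagonal value, we get $b = \binom{d-1}{k-2}$ and $a = \binom{d}{k-1} - \binom{d-1}{k-2} = \binom{d-1}{k-1}$ by Pascal's rule, yielding exactly $KK^{\mathrm{T}} = \binom{d-1}{k-1} I_{d+1} + \binom{d-1}{k-2} J$.

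The only mild subtlety is the convention for binomial coefficients when $k$ is small (e.g. $k=1$, where $\binom{d-1}{k-2}=\binom{d-1}{-1}=0$ must be interpreted as $0$) or when $k=d+1$; since Theorem~\ref{theorem::binomialConstruction} assumes $k<d+1$ and $k\in\mathbb{N}$, the relevant edge case $k=1$ is handled by the standard convention $\binom{n}{-1}=0$, and the formula degenerates correctly to $KK^{\mathrm{T}} = \binom{d-1}{0} I_{d+1} = I_{d+1}$, consistent with $K$ then being a permutation-like $0/1$ matrix with a single $1$ per column. I would state this convention explicitly and then present the two-line counting argument. I do not anticipate a genuine obstacle here; the result is purely a double-counting identity, and the main thing to get right is Pascal's rule $\binom{d}{k-1} = \binom{d-1}{k-1} + \binom{d-1}{k-2}$ to match the claimed form.
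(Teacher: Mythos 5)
Your proposal is correct and follows essentially the same route as the paper: compute the $(p,q)$ entry of $KK^{\mathrm{T}}$ as the number of $k$-subsets containing $p$ (diagonal) or containing both $p$ and $q$ (off-diagonal), then use Pascal's rule $\binom{d}{k-1}=\binom{d-1}{k-1}+\binom{d-1}{k-2}$ to match the stated decomposition into $I_{d+1}$ and $J$. Your explicit remark about the convention $\binom{d-1}{-1}=0$ when $k=1$ is a small clarification the paper leaves implicit, but the argument is otherwise identical.
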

\begin{proof}
	Set $K = [k_{ij}]$ for $1\leq i\leq d+1$ and $1\leq j \leq d'$ and note that $k_{ij} = 1$ if and only if $i\in\Lambda_{j}$. Let
	\[
		\tilde{k}_{ij} = \sum_{m=1}^{d'}k_{im}k_{jm}
	\]
	denote the $(i,j)^{\mathrm{th}}$ entry of $KK^{\mathrm{T}}$. Then $\tilde{k}_{ii} = \sum_{m=1}^{d'}k_{im}^{2}$ is precisely the number of subsets $\Lambda_{m}\subseteq\{1,\ldots,d+1\}$ of size $k$ that contain $i$, so $\tilde{k}_{ii} = \binom{d}{k-1} = \binom{d-1}{k-1}+\binom{d-1}{k-2}$. Similarly, if $i\neq j$ then $\tilde{k}_{ij} = \sum_{m=1}^{d'} k_{im}k_{jm}$ counts the number of subsets $\Lambda_{m}$ that contain both $i$ and $j$, so $\tilde{k}_{ij} = \binom{d-1}{k-2}$ if $i\neq j$. Thus $KK^{\mathrm{T}}$ has the desired form.
\end{proof}
The \emph{frame potential} \cite{BF03} of a set of vectors $\{x_i\}_{i=1}^N$ is
$$FP\{x_i\}_{i = 1}^N := \sum_{i=1}^N \sum_{j=1}^N |\langle x_i, x_j\rangle|^2.$$
Note that the frame potential is the trace of the square of the Gram matrix of $\{x_i\}_{i=1}^N.$

\begin{theorem}\label{BenedettoFickusFramePotential}\cite{BF03}
For a set of $N$ unit vectors $\{x_i\}_{i=1}^N$ in a $d$-dimensional space, if $N \geq d,$ the minimum value of the frame potential is $N^2/d,$ and the minimizers are precisely the unit normed tight frames of the underlying space.
\end{theorem}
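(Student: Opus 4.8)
The plan is to reduce the frame potential to a statement about the eigenvalues of the frame operator $S = FF^*$, where $F$ is the $d \times N$ matrix whose columns are $x_1, \ldots, x_N$. First I would invoke the remark recorded just before the statement, namely that $FP\{x_i\}_{i=1}^N = \operatorname{tr}(G^2)$, and then observe that since $G = F^*F$ and $S = FF^*$ share the same nonzero eigenvalues, one has $\operatorname{tr}(G^2) = \operatorname{tr}(S^2)$. This lets me work entirely with the $d \times d$ positive semidefinite matrix $S$ rather than with the $N \times N$ Gram matrix.

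Next I would diagonalize $S$ and let $\lambda_1, \ldots, \lambda_d \geq 0$ denote its eigenvalues. Two facts drive the argument: first, $\operatorname{tr}(S) = \sum_{i=1}^d \lambda_i = \sum_{i=1}^N \|x_i\|^2 = N$, since each $x_i$ is a unit vector; second, $FP = \operatorname{tr}(S^2) = \sum_{i=1}^d \lambda_i^2$. The problem thus becomes the constrained minimization of $\sum \lambda_i^2$ subject to $\sum \lambda_i = N$ with $\lambda_i \geq 0$. By the Cauchy--Schwarz inequality (equivalently, convexity of $t \mapsto t^2$),
\[
	\sum_{i=1}^d \lambda_i^2 \geq \frac{1}{d}\left(\sum_{i=1}^d \lambda_i\right)^2 = \frac{N^2}{d},
\]
which establishes the claimed lower bound on the frame potential.

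For the characterization of the minimizers I would use the equality condition in Cauchy--Schwarz: equality holds if and only if all the $\lambda_i$ are equal, hence $\lambda_i = N/d$ for every $i$. Since $N \geq d$, this common value is strictly positive, so $S = \frac{N}{d}I_d$ is a nonzero scalar multiple of the identity, which is exactly the defining condition for $\{x_i\}_{i=1}^N$ to be a unit-normed tight frame with frame bound $N/d$. Conversely, any unit-normed tight frame has $S = \frac{N}{d}I_d$, so all its eigenvalues equal $N/d$ and its potential is precisely $N^2/d$. This yields the desired two-sided characterization.

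The step requiring the most care is the passage between the Gram matrix and the frame operator together with the bookkeeping of eigenvalue multiplicities. I must confirm that $\operatorname{tr}(G^2) = \operatorname{tr}(S^2)$ genuinely holds: the $N-d$ extra zero eigenvalues of $G$ contribute nothing to the sum of squares, so the two traces agree. I must also verify that the equality case forces $S$ to have full rank $d$, rather than merely being proportional to the identity on a proper subspace. The hypothesis $N \geq d$ is precisely what guarantees that the minimizing eigenvalue $N/d$ is positive, so that the minimizer is a genuine tight frame spanning the whole space; without it the minimum of the potential, and its minimizers, would be of a different character.
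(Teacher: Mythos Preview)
The paper does not supply its own proof of this theorem; it is simply cited from Benedetto and Fickus~\cite{BF03} and used as a black box in the proof of Theorem~\ref{theorem::binomialConstruction}. So there is no in-paper argument to compare against.

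Your proof is correct and complete. The reduction $FP = \operatorname{tr}(G^{2}) = \operatorname{tr}(S^{2})$, the trace constraint $\operatorname{tr}(S) = N$, and the Cauchy--Schwarz step $\sum\lambda_i^2 \geq \frac{1}{d}(\sum\lambda_i)^2$ give the bound cleanly, and your treatment of the equality case is accurate, including the observation that $N\geq d$ is exactly what forces the common eigenvalue $N/d$ to be positive so that $S = \frac{N}{d}I_d$ has full rank. For context, the original argument in~\cite{BF03} is a variational one on the product of unit spheres, analyzing local minimizers of the frame potential; the spectral argument you give is the now-standard shortcut that bypasses that machinery entirely.
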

The proof of Theorem~\ref{theorem::binomialConstruction} is now provided below.
\begin{proof}[Proof of Theorem~\ref{theorem::binomialConstruction}]
First it will be shown that $\{g_{i}\}_{i=1}^{d'}$ as defined in the statement of the theorem is a unit normed tight frame.
	Let $K$ denote the matrix given in Lemma~\ref{lemma::binaryMatrix}. If $F$ is the matrix with columns $\{f_{i}\}_{i=1}^{d+1}$, then it follows that $Fk_{i} = \sum_{j\in\Lambda_{i}}f_{j}$. The matrix with columns $\{g_{i}\}_{i=1}^{d'}$ can then be written as
	\[
		\sqrt{\frac{d}{k(d+1-k)}}FK,
	\]
	where the scalar term comes from Lemma~\ref{lemma::sumNorms}. This implies that the Gram matrix $G_{1}$ of $\{g_{i}\}_{i=1}^{d'}$ is the matrix
	\[
		\frac{d}{k(d+1-k)}(FK)^{\mathrm{T}}(FK) = \frac{d}{k(d+1-k)}K^{\mathrm{T}}GK
	\]
where $G$ denotes the Gram matrix of $\{f_{i}\}_{i=1}^{d+1}$. It will be shown that $\{g_{i}\}_{i=1}^{d'}$ is tight by computing its frame potential and using Theorem~\ref{BenedettoFickusFramePotential}. Let $c_{1} = \binom{d-1}{k-1}$ and $c_{2} = \binom{d-1}{k-2}.$ Then by Lemma~\ref{lemma::binaryMatrix}
	\begin{align*}
		FP\{g_{i}\}_{i=1}^{d'}
		&= \operatorname{tr}G_{1}^{2} \\
		& = \left(\frac{d}{k(d+1-k)}\right)^{2}\operatorname{tr}(K^{\mathrm{T}}GKK^{\mathrm{T}}GK) \\
		&= \left(\frac{d}{k(d+1-k)}\right)^{2}\operatorname{tr}(K^{\mathrm{T}}G(c_{1}I+c_{2}J)GK).
	\end{align*}
	According to the hypothesis of Theorem~\ref{theorem::binomialConstruction}, $\dotProduct{f_{i}}{f_{j}} = -\frac{1}{d}$ 	for all $i\neq j.$ This makes the product $GJ$ equal to the $(d+1)\times(d+1)$ zero matrix. Therefore,
	\begin{align*}
		FP\{g_{i}\}_{i=1}^{d'}
		&= c_{1}\left(\frac{d}{k(d+1-k)}\right)^{2}\operatorname{tr}(K^{\mathrm{T}}G^{2}K) \\
		&= c_{1}\left(\frac{d}{k(d+1-k)}\right)^{2}\operatorname{tr}(G^{2}KK^{\mathrm{T}}) \\
		&= c_{1}\left(\frac{d}{k(d+1-k)}\right)^{2}\operatorname{tr}(G^{2}(c_{1}I+c_{2}J)) \\
		&= c_{1}^{2}\left(\frac{d}{k(d+1-k)}\right)^{2}\operatorname{tr}(G^{2}) \\
		&= \left(\frac{d}{k(d+1-k)}c_{1}\right)^{2}\frac{(d+1)^{2}}{d}
	\end{align*}
	where the last equality follows from the fact that $\{f_{i}\}_{i=1}^{d+1}$ is a unit normed tight frame and the result in Theorem~\ref{BenedettoFickusFramePotential}. Further simplification gives
	\begin{align*}
		FP\{g_{i}\}_{i=1}^{d'}
		&= \left(\frac{d}{k(d+1-k)}c_{1}\right)^{2}\frac{(d+1)^{2}}{d} \\
		&= \left[\frac{(d+1)d}{k(d+1-k)}\binom{d-1}{k-1}\right]^{2}\frac{1}{d} \\
		&= \binom{d+1}{k}^{2}\frac{1}{d} \\
		&= \frac{(d')^{2}}{d}.
	\end{align*}
	Hence $\{g_{i}\}_{i=1}^{d'}$ is a unit normed tight frame for $\mathbb{R}^{d}$ by Theorem~\ref{BenedettoFickusFramePotential}.

	It remains to be shown that the set $\{g_{i}\}_{i=1}^{d'}$ is also a $\hat{k}$-angle frame where $\hat{k}\leq k$.
	Let $i,j\leq d'$ with $i\neq j$.
	By the proof of Lemma~\ref{lemma::sumNorms}
	\begin{align*}
		\dotProduct{g_{i}}{g_{j}}
		&= \frac{d}{k(d+1-k)}\sum_{i'\in\Lambda_{i}}\sum_{j'\in\Lambda_{j}}\dotProduct{f_{i'}}{f_{j'}}.
	\end{align*}
	 Now set $l = |\Lambda_{i}\cap\Lambda_{j}|$. Then the double summation can be rewritten as
	\begin{align*}
		 \sum_{i'\in\Lambda_{i}}\sum_{j'\in\Lambda_{j}}\dotProduct{f_{i'}}{f_{j'}}
		&= \sum_{i'\in\Lambda_{i}\cap\Lambda_{j}}\sum_{j'\in\Lambda_{j}\cap\Lambda_{i}}\dotProduct{f_{i'}}{f_{j'}}
		 + \sum_{i'\in\Lambda_{i}\setminus\Lambda_{j}}\sum_{j'\in\Lambda_{j}\cap\Lambda_{i}}\dotProduct{f_{i'}}{f_{j'}} \\
		&\quad{} + \sum_{i'\in\Lambda_{i}\cap\Lambda_{j}}\sum_{j'\in\Lambda_{j}\setminus\Lambda_{i}}\dotProduct{f_{i'}}{f_{j'}} + \sum_{i'\in\Lambda_{i}\setminus\Lambda_{j}}\sum_{j'\in\Lambda_{j}\setminus\Lambda_{i}}\dotProduct{f_{i'}}{f_{j'}} \\
		&= \left[l(1) - l(l-1)\frac{1}{d} \right] + \left[-\frac{1}{d}(k-l)l \right] + \left[-\frac{1}{d}(k-l)l \right] + \left[-\frac{1}{d}(k-l)^{2}\right] \\
		&= l-\frac{1}{d}(k^{2}-l).
	\end{align*}
	Therefore
	\[
		\dotProduct{g_{i}}{g_{j}} = \frac{d}{k(d+1-k)}\left[l-\frac{1}{d}(k^{2}-l)\right] = \frac{l(d+1)-k^{2}}{k(d+1-k)}.
	\]
	Since $0\leq l \leq k-1$ if $i\neq j$, there are $k$ different choices for $l$ in the above formula. Hence $\dotProduct{g_{i}}{g_{j}}$ can take on at most $k$ different values when $i\neq j$, which finishes the proof.
\end{proof}

\section{Acknowledgments}
\label{Acknowledge}
The authors would like to acknowledge support from the National Science Foundation under Award No. CCF-1422252. The authors are grateful to Doug Cochran and Kasso Okoudjou for their valuable insight and suggestions on this subject.

\bibliographystyle{unsrt} 
\bibliography{ETF_refs}

\begin{thebibliography}{10}

\bibitem{W1}
L.~R. Welch.
\newblock Lower bounds on the maximum cross correlation of signals.
\newblock {\em IEEE Transactions on Information Theory}, 20(3):397--399, May
  1974.

\bibitem{SH1}
T.~Strohmer and R.~W. Heath, Jr.
\newblock Grassmannian frames with applications to coding and communication.
\newblock {\em Applied and Computational Harmonic Analysis}, 14(3):257--275,
  2003.

\bibitem{DattaHC}
S.~Datta, S.~D. Howard, and D.~Cochran.
\newblock Geometry of the {W}elch bounds.
\newblock {\em Linear Algebra Appl.}, 437(10):2455 -- 2470, 2012.

\bibitem{Ren04}
J.~M. Renes, R.~Blume-Kohout, A.~J. Scott, and C.~M. Caves.
\newblock Symmetric informationally complete quantum measurements.
\newblock {\em Journal of Mathematical Physics}, 45(6):2171 -- 2180, 2004.

\bibitem{Sco06}
A.~J. Scott.
\newblock Tight informationally complete quantum measurements.
\newblock {\em Journal of Physics A: Mathematical and General}, 39:13507 --
  13530, 2006.

\bibitem{KR1}
A.~Klappenecker and M.~R{\"o}tteler.
\newblock Mutually unbiased bases are complex projective 2-designs.
\newblock In {\em Proceedings of the International Symposium on Information
  Theory}, pages 1740 -- 1744, September 2005.

\bibitem{RoyScott07}
A.~Roy and A.~J. Scott.
\newblock Weighted complex projective $2$-designs from bases: {O}ptimal state
  determination by orthogonal measurements.
\newblock {\em Journal of Mathematical Physics}, 48(072110), 2007.

\bibitem{Hoggar82}
S.~G. Hoggar.
\newblock $t$-designs in projective spaces.
\newblock {\em European J. Combin.}, 3(3):233--254, 1982.

\bibitem{Holmes04}
R.~B. Holmes and V.~I. Paulsen.
\newblock Optimal frames for erasures.
\newblock {\em Linear Algebra Appl.}, 377:31--51, 2004.

\bibitem{Tropp05}
J.~A. Tropp.
\newblock Complex equiangular tight frames.
\newblock In {\em Proc. SPIE Wavelets XI}, pages 590412.01--11, 2005.

\bibitem{Sustik07}
M.~A. Sustik, J.~A. Tropp, I.~S. Dhillon, and R.~W. Heath, Jr.
\newblock On the existence of equiangular tight frames.
\newblock {\em Linear Algebra Appl.}, 426(2-3):619--635, 2007.

\bibitem{Bodmann09}
B.~G. Bodmann, V.~I. Paulsen, and M.~Tomforde.
\newblock Equiangular tight frames from complex {S}eidel matrices containing
  cube roots of unity.
\newblock {\em Linear Algebra Appl.}, 430(1):396--417, 2009.

\bibitem{Bodmann2010}
B.~G. Bodmann and H.~J. Elwood.
\newblock Complex equiangular {P}arseval frames and {S}eidel matrices
  containing {$p$}th roots of unity.
\newblock {\em Proc. Amer. Math. Soc.}, 138(12):4387--4404, 2010.

\bibitem{Waldron09}
S.~Waldron.
\newblock On the construction of equiangular tight frames from graphs.
\newblock {\em Linear Algebra Appl.}, 431(11):2228--2242, 2009.

\bibitem{Fickus2012}
Matthew Fickus, Dustin~G. Mixon, and Janet~C. Tremain.
\newblock {Steiner equiangular tight frames}.
\newblock {\em Linear Algebra Appl.}, 436(5):1014--1027, March 2012.

\bibitem{Lemmens73}
P.~W.~H. Lemmens and J.~J. Seidel.
\newblock Equiangular lines.
\newblock {\em J. Algebra}, 24:494--512, 1973.

\bibitem{DGS75}
P.~Delsarte, J.~M. Goethals, and J.~J. Seidel.
\newblock Bounds for systems of lines and {J}acobi polynomials.
\newblock {\em Philips Res. Repts.}, 30(3):91--105, 1975.
\newblock Issue in honour of C.J. Bouwkamp.

\bibitem{Barg2014}
Alexander Barg, Alexey Glazyrin, Kasso~A. Okoudjou, and Wei-Hsuan Yu.
\newblock Finite two-distance tight frames.
\newblock {\em Linear Algebra Appl.}, 475:163--175, 2015.

\bibitem{Larman77}
D.~G. Larman, C.~A. Rogers, and J.~J. Seidel.
\newblock On two-distance sets in {E}uclidean space.
\newblock {\em Bull. London Math. Soc.}, 9(3):261--267, 1977.

\bibitem{Brouwer2011}
A.~E. Brouwer and W.~H. Haemers.
\newblock {\em Spectra of graphs}.
\newblock Springer Science \& Business Media, 2011.

\bibitem{DGS1}
P.~Delsarte, J.~M. Goethals, and J.~J. Seidel.
\newblock Spherical codes and designs.
\newblock {\em Geometriae Dedicata}, 6(6):363--388, 1977.

\bibitem{Goyal_2001}
V.~K. Goyal, J.~Kova\v{c}evi\'{c}, and J.~A. Kelner.
\newblock Quantized frame expansions with erasures.
\newblock {\em Applied and Computational Harmonic Analysis}, 10:203 -- 233,
  2001.

\bibitem{Seidel73}
J.~J. Seidel.
\newblock A survey of two-graphs.
\newblock In {\em Colloquio {I}nternazionale sulle {T}eorie {C}ombinatorie
  ({R}ome, 1973), {T}omo {I}}, pages 481--511. Atti dei Convegni Lincei, No.
  17. Accad. Naz. Lincei, Rome, 1976.

\bibitem{Sylvester1867}
J.J. Sylvester.
\newblock Thoughts on inverse orthogonal matrices, simultaneous sign
  successions, and tessellated pavements in two or more colours, with
  applications to {N}ewton's rule, ornamental tile-work, and the theory of
  numbers.
\newblock {\em Philosophical Magazine}, 34:461--475, 1867.

\bibitem{Durt2010}
T.~Durt, B.~Englert, I.~Bengtsson, and K.~{\.Z}yczkowski.
\newblock On mutually unbiased bases.
\newblock {\em International Journal of Quantum Information}, 8(04):535--640,
  2010.

\bibitem{BF03}
J.~J. Benedetto and M.~Fickus.
\newblock Finite normalized tight frames.
\newblock {\em Advances in Computational Mathematics}, 18:357--385, 2003.

\end{thebibliography}
\end{document}